%% file: main.tex
\pdfoutput=1
\documentclass[conference]{IEEEtran}
\usepackage{amsmath,amssymb}
\usepackage{graphicx}
\usepackage{cite}
\usepackage{tikz}
\usetikzlibrary{arrows.meta,calc,shadows}
\usepackage[hidelinks]{hyperref}

\newcommand{\eff}{\mathrm{eff}}
\newcommand{\SINR}{\gamma}
\newcommand{\E}{\mathbb{E}}
\newtheorem{lemma}{Lemma}
\newtheorem{proposition}{Proposition}
\newtheorem{corollary}{Corollary}

\begin{document}

\title{Impact of Antenna Position Errors on TDMA and NOMA in Pinching-Antenna Systems}

\author{
\IEEEauthorblockN{Wei Jiang}
\IEEEauthorblockA{Intelligent Networking Research Group\\
German Research Center for Artificial Intelligence (DFKI)\\
Kaiserslautern, 67663 Germany}
\and
\IEEEauthorblockN{Hans D. Schotten}
\IEEEauthorblockA{Department of Electrical and Computer Engineering\\
University of Kaiserslautern (RPTU)\\
Kaiserslautern, 67663 Germany}
}

\maketitle

\begin{abstract}
Pinching-antenna systems (PASS) create reconfigurable wireless channels by moving dielectric antennas along a waveguide.
Mechanical position errors translate into phase errors that degrade performance, yet almost all prior PASS studies assume ideal placement. This paper models
this position error, derives the resulting phase error, and analyzes its
impact on time-division multiple access (TDMA) and non-orthogonal
multiple access (NOMA), obtaining their ergodic-rate bounds and
closed-form outage probabilities. Analyses and simulations reveal that
single-antenna TDMA is immune to position error, whereas multi-antenna
TDMA loses array gain and NOMA suffers an interference floor that cannot be overcome by increasing transmit power.
\end{abstract}

\begin{IEEEkeywords}
Antenna position error, ergodic rate, multiple access, outage
probability, pinching-antenna systems.
\end{IEEEkeywords}

\section{Introduction}

\IEEEPARstart{P}{inching}-antenna systems (PASS) couple small dielectric
radiators---\emph{pinching antennas} (PAs), or pinches---onto a
waveguide~\cite{Ref_Ding2025PinchingPerspective,Ref_Liu2025Tutorial}.
Unlike conventional antennas at fixed positions, a PASS creates
reconfigurable wireless channels by moving its antennas, making the
antenna position the optimization variable. This flexibility, however,
has a mechanical price: the position is set by an actuator. Its
resolution, drift, vibration, and thermal creep cause position errors,
which translate into phase errors that degrade performance. Early work on
PASS identifies imprecise antenna positioning as an open
problem~\cite{Ref_Ding2025PinchingPerspective}.

Nevertheless, antenna position error has received little attention: prior
PASS studies almost invariably assume ideal placement, e.g.,
\cite{Ref_Tyrovolas2026Performance,Ref_Yue2026NOMAPerformance}.
So far, only one work~\cite{Ref_Pakravan2026SecrecyPosition} analyzes it, but only as a statistical input to the secrecy
outage of a single-user link, and~\cite{Ref_Chen2026HybridPASS} evaluates its rate
loss by simulation. Two others treat the position as a design variable, without modeling its
error:~\cite{Ref_AlaaEldin2026BERNOMA} shows analytically that, in the
uplink, performance
fluctuates so rapidly with position that, in some regions, a $1$\,mm change
moves the bit error rate by about $10$\,dB;~\cite{Ref_Wang2026RSMAMultiCarrier}
places its antennas by minimizing a phase-shift error and reports from
simulation that rate-splitting multiple access tolerates inaccurate
positions better. The work in~\cite{Ref_Bozkurt2026Trajectory} explicitly asks whether the
actuator error matters and answers that a sub-centimeter displacement
causes a negligible path-loss change, under $0.1$\,dB, but does not
consider the phase.
How a displacement becomes a phase error, and what performance loss that error causes, has not been analyzed in closed form.

To the best of our knowledge, this paper is the first to fill this gap,
with an analytical framework that traces the position error from the
mechanical displacement to the link performance. Specifically, the main
contributions of this paper include:
\begin{enumerate}[\setlength{\itemsep}{0pt}\setlength{\parsep}{0pt}\setlength{\topsep}{1pt}]
\item \emph{Position-error modeling} (Section~\ref{sec:chain}): the
random displacement of a pinch along its waveguide is mapped to the phase
error, its root-mean-square (rms) value, and the coherence factor of the
link.
\item \emph{Performance analysis} (Section~\ref{sec:impact}): for
time-division multiple access (TDMA) and non-orthogonal multiple access
(NOMA), the ergodic rate of each
scheme is bounded, and the outage probability is derived in closed form
for a single pinch under either
scheme and for two-pinch TDMA.
\item \emph{Findings}: analytical and numerical results reveal that
single-pinch TDMA is immune to position error and therefore the most robust;
multi-pinch TDMA degrades gracefully, losing part of its array gain; and
NOMA is the most vulnerable, suffering an interference floor that cannot
be overcome by increasing transmit power.
\end{enumerate}

The remainder of this paper is organized as follows.
Section~\ref{sec:model} presents the system model.
Section~\ref{sec:chain} models the position error, from the displacement to
the phase error and its statistics. Section~\ref{sec:impact} analyzes the
ergodic rate and the outage probability of TDMA and NOMA,
Section~\ref{sec:num} presents numerical results, and
Section~\ref{sec:concl} concludes the paper.

\section{System Model}\label{sec:model}

As shown in Fig.~\ref{fig:sysmodel}, a waveguide along the
$x$-axis at height $d$, fed by a radio-frequency (RF) chain at
$\boldsymbol{\psi}_0=\nobreak(x_0,0,d)$, carries $N$ active PAs at
$\boldsymbol{\psi}_n=\nobreak(x_n,0,d)$, $n=1,\dots,N$, serving $K$
single-antenna users at $\mathbf{u}_k=\nobreak(x_k,y_k,0)$, $k=1,\dots,K$. The
PAs are positioned either by \emph{continuous} placement,
which slides a PA to any point of the waveguide, or by \emph{discrete} placement, in
which PAs are pre-installed at fixed points, spaced by at least
$\lambda/2$, and a subset is
activated~\cite{Ref_Wang2025AntennaActivationNOMA}.
With the feed-to-PA distance
$d_n=\lVert\boldsymbol{\psi}_n-\boldsymbol{\psi}_0\rVert=x_n-x_0$ and the
free-space distance from PA $n$ to user $k$,
$r_{nk}=\lVert\mathbf{u}_k-\boldsymbol{\psi}_n\rVert=\sqrt{(x_k-x_n)^2+y_k^2+d^2}$,
the phase accumulated is
\begin{equation}
\phi_{nk}\triangleq 2\pi r_{nk}/\lambda+2\pi d_n/\lambda_g,
\label{eq:phi}
\end{equation} where $\lambda_g=\lambda/n_\eff$ and $n_\eff>1$ is the effective refractive index.
User $k$ receives
$y_k=\sum_{n=1}^{N}\sqrt{\eta P_n}\,r_{nk}^{-1}e^{-j\phi_{nk}}\,s+w_k$,
with $\eta\triangleq c^2/(4\pi f_c)^2$, $c$ the speed of light, $f_c$ the
carrier frequency, $P_n$ the power radiated by PA $n$, $s$ the unit-power
composite symbol that the single RF chain delivers to every PA, and
$w_k\sim\mathcal{CN}(0,\sigma_w^2)$.
Under the equal-power allocation $P_n=P/N$, with $P$ the total transmit
power, $y_k=\sqrt{P/N}\,g_k\,s+w_k$ with user $k$'s effective channel
\begin{equation}
g_k=\sum\nolimits_{n=1}^{N}g_{nk},\qquad
g_{nk}\triangleq\sqrt{\eta}\,r_{nk}^{-1}e^{-j\phi_{nk}} .
\label{eq:geff}
\end{equation}

\begin{figure}[!t]
\centering
\resizebox{\columnwidth}{!}{\input{fig_sysmodel}}
\caption{System model with antenna position error. The projection of $r_{nk}$ on the waveguide axis gives $\cos\theta=(x_n-x_k)/r_{nk}$.}
\label{fig:sysmodel}
\end{figure}

\section{Position-Error Modeling}\label{sec:chain}

Under continuous placement, given a desired position $x_n$, pinch $n$
actually sits at $x_n+\Delta x_n$, where $\Delta x_n$ is the mechanical
mismatch.\footnote{Under discrete placement, the pinch can be placed only
at pre-installed positions spaced by $L$, so the desired $x_n$ is generally
unavailable: the nearest pre-installed position, $x_n+q_n$ with the
quantization offset $|q_n|\le L/2$, is activated instead, and the pinch
actually sits at $x_n+q_n+\Delta x_n$. Unlike $\Delta x_n$, $q_n$ is deterministic
and known, so it can be accounted for in the design; what remains random
and unknown is $\Delta x_n$, to which our analysis applies
straightforwardly.} Assuming the
$\Delta x_n$, $n=1,\dots,N$, are independent and identically distributed (i.i.d.),
this section models the phase error for one antenna and drops the index
$n$.

\subsection{Phase Sensitivity: From $\Delta x$ to $\Delta\phi$}
Both terms of \eqref{eq:phi} depend on $x_n$, so a displacement $\Delta x$
produces a phase error $\Delta\phi$.

\begin{lemma}\label{lem:S}
For $\Delta x\ll r_{nk}$, the phase error is linear in the displacement,
i.e.,
\begin{equation}
\Delta\phi=S\,\Delta x,\qquad
S=\frac{2\pi}{\lambda}\big(n_\eff+\cos\theta\big),
\label{eq:dphi}
\end{equation}
where we define the phase sensitivity $S$ as the phase change per unit
displacement, with $\theta$ as in Fig.~\ref{fig:sysmodel}.
\end{lemma}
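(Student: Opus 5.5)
Since $\phi_{nk}$ in \eqref{eq:phi} is a smooth function of the single variable $x_n$, the plan is to replace $x_n$ by $x_n+\Delta x$ and linearize. Define $\Delta\phi\triangleq\phi_{nk}(x_n+\Delta x)-\phi_{nk}(x_n)$ and handle the two terms of \eqref{eq:phi} separately. The guided term is exactly linear. Because $d_n=x_n-x_0$, a shift of the pinch changes it by exactly $\Delta x$, so
\[
\frac{2\pi}{\lambda_g}\Delta x=\frac{2\pi n_\eff}{\lambda}\Delta x ,
\]
where we used $\lambda_g=\lambda/n_\eff$. This contribution needs no approximation.

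The free-space term is where the assumption $\Delta x\ll r_{nk}$ enters. Write
\[
r_{nk}(x_n+\Delta x)=\sqrt{(x_k-x_n-\Delta x)^2+y_k^2+d^2}
\]
and expand it to first order in $\Delta x$. Differentiating gives
\[
\frac{\partial r_{nk}}{\partial x_n}=\frac{x_n-x_k}{r_{nk}}=\cos\theta ,
\]
which is exactly the projection defined in Fig.~\ref{fig:sysmodel}. Hence
\[
r_{nk}(x_n+\Delta x)=r_{nk}+\cos\theta\,\Delta x+R ,
\]
and the free-space contribution is $\frac{2\pi}{\lambda}\cos\theta\,\Delta x$ up to the remainder $R$. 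Adding the two contributions yields $\Delta\phi=S\Delta x$ with $S=\frac{2\pi}{\lambda}(n_\eff+\cos\theta)$, which is \eqref{eq:dphi}.

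The only real work is justifying that the remainder $R$ can be dropped. I would bound it with Taylor's theorem and the exact expression
\[
\frac{\partial^2 r_{nk}}{\partial x_n^2}=\frac{\sin^2\theta}{r_{nk}} .
\]
This shows that $R=\frac{\sin^2\tilde\theta}{2\tilde r}\Delta x^2$ for some intermediate point. Equivalently, $R$ is smaller than the linear term by a factor of order $\Delta x/r_{nk}$, and so it is negligible under the stated condition.

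One subtlety is that $R$ is multiplied by $2\pi/\lambda$. I would argue that $\Delta x\sim$ mm while $r_{nk}\sim$ m, so $\pi\Delta x^2/(\lambda r_{nk})$ is much less than 1 rad at typical carriers. The lemma's hypothesis is intended to cover exactly this regime, and I would state that explicitly.

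The amplitude factor $r_{nk}^{-1}$ also changes with $\Delta x$, but only by a relative amount $O(\Delta x/r_{nk})$. It is not part of $\Delta\phi$, consistent with the remark in the introduction that the path-loss change is negligible.
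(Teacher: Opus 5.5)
Your proof is correct and follows the paper's route. The guided term is exactly linear via $\Delta d_n=\Delta x$. The free-space distance is expanded to second order, with linear coefficient $\cos\theta$ and curvature term $\sin^2\theta\,\Delta x^2/(2r_{nk})$, and the curvature term is then shown to be negligible.

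Your Lagrange-remainder form is slightly tidier than the paper's truncated series with an $O(\Delta x^3/r_{nk}^2)$ tail. The remainder is exact, and $\sin^2\tilde\theta/\tilde r=(y_k^2+d^2)/\tilde r^{3}\le1/d$ bounds it uniformly.

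You differ from the paper in how you certify negligibility, and there your ``equivalently'' is not accurate. Measured against the free-space linear term $\cos\theta\,\Delta x$, the remainder is not small as $\cos\theta\to0$. That is precisely the pinch-above-user geometry used later in the outage analysis.

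The paper makes the relative statement uniform in $\theta$ by comparing with the full linear phase $S\,\Delta x$. Thanks to the guided term, its coefficient stays at least $(2\pi/\lambda)(n_\eff-1)>0$. The ratio is then $\frac{\sin^2\theta}{2(n_\eff+\cos\theta)}\frac{\Delta x}{r_{nk}}\le\big(n_\eff-\sqrt{n_\eff^2-1}\big)\frac{\Delta x}{r_{nk}}$, with $\lambda$ cancelling. So the lemma's hypothesis $\Delta x\ll r_{nk}$ suffices on its own.

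Your absolute bound $\pi\Delta x^2/(\lambda r_{nk})\ll1$ rad is also valid and uniform in $\theta$, and it is what the numerical section actually checks. However, as you note yourself, it requires the extra condition $\Delta x^2\ll\lambda r_{nk}$ rather than following from $\Delta x\ll r_{nk}$ alone.
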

\begin{IEEEproof}
By \eqref{eq:phi}, $\Delta\phi=(2\pi/\lambda)\,\Delta r_{nk}
+(2\pi/\lambda_g)\,\Delta d_n$. The feed-to-PA distance $d_n=x_n-x_0$ is
linear in $x_n$, so $\Delta d_n=\Delta x$ exactly. The free-space distance
is not: let $\tilde r_{nk}$ denote the actual distance, with the pinch at
$x_n+\Delta x$. Then
\begin{equation}
\begin{split}
\Delta r_{nk}&\triangleq\tilde r_{nk}-r_{nk}\\
&=\sqrt{(x_k-x_n-\Delta x)^2+y_k^2+d^2}-r_{nk}\\
&=\sqrt{r_{nk}^2+2r_{nk}\cos\theta\,\Delta x+\Delta x^2}-r_{nk}\\
&=r_{nk}\Big(\sqrt{1+2\cos\theta\,\Delta x/r_{nk}
+\Delta x^2/r_{nk}^2}-1\Big).
\end{split}
\label{eq:dr}
\end{equation}
The root in \eqref{eq:dr} expands to second order as
$1+\cos\theta\,\Delta x/r_{nk}
+\tfrac{1}{2}\sin^2\!\theta\,\Delta x^2/r_{nk}^2$, so that
\begin{equation}
\Delta r_{nk}=\cos\theta\,\Delta x
+\sin^2\!\theta\,\frac{\Delta x^2}{2r_{nk}}
+O\!\Big(\frac{\Delta x^3}{r_{nk}^2}\Big).
\label{eq:drexp}
\end{equation}
Substituting $\Delta d_n$ and $\Delta r_{nk}$ in \eqref{eq:phi} gives
\begin{equation}
\Delta\phi=S\,\Delta x
+\frac{\pi}{\lambda}\sin^2\!\theta\,\frac{\Delta x^2}{r_{nk}}
+O\!\Big(\frac{\Delta x^3}{\lambda r_{nk}^2}\Big).
\label{eq:dphiexp}
\end{equation}
For any $\theta$, the second term of \eqref{eq:dphiexp}, as a fraction of
the first, is
\begin{equation*}
\begin{split}
\frac{|\Delta\phi-S\,\Delta x|}{S\,\Delta x}
&=\frac{\sin^2\!\theta}{2(n_\eff+\cos\theta)}\,\frac{\Delta x}{r_{nk}}\\
&\le\Big(n_\eff-\sqrt{n_\eff^2-1}\Big)\frac{\Delta x}{r_{nk}}
<\frac{\Delta x}{r_{nk}} ,
\end{split}
\end{equation*}
$\lambda$ cancelling and higher orders contributing a further factor
$\Delta x/r_{nk}$; over $\cos\theta\in[-1,1]$ the geometric factor peaks
at $\cos\theta=-(n_\eff-\sqrt{n_\eff^2-1})$, where it equals that same
value, below $1$ for $n_\eff>1$. With $\Delta x$ in millimeters and
$r_{nk}$ in meters, $\Delta x/r_{nk}$ is of order $10^{-3}$, so the
remainder is negligible and \eqref{eq:dphi} follows.
\end{IEEEproof}

\begin{corollary}[Sensitivity bounds]\label{cor:S}
The sensitivity of \eqref{eq:dphi} obeys
\begin{equation}
0<\frac{2\pi}{\lambda}\big(n_\eff-1\big)\le S\le S_{\max}
\triangleq\frac{2\pi}{\lambda}\big(n_\eff+1\big).
\label{eq:Sbounds}
\end{equation}
\end{corollary}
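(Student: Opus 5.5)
The bounds follow directly from the closed form of the sensitivity in Lemma~\ref{lem:S}, $S=\frac{2\pi}{\lambda}(n_\eff+\cos\theta)$. The plan is to treat $S$ as a function of the single geometric quantity $\cos\theta$. Then we bound that quantity using the definition in Fig.~\ref{fig:sysmodel}, $\cos\theta=(x_n-x_k)/r_{nk}$.

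First, we show $|\cos\theta|\le 1$. This is immediate from $r_{nk}=\sqrt{(x_k-x_n)^2+y_k^2+d^2}\ge|x_k-x_n|$, so $-1\le\cos\theta\le 1$. Equality is possible only in the limiting case $y_k^2+d^2=0$. Since the waveguide height satisfies $d>0$, the inequality is in fact strict, but the non-strict form suffices for the statement.

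Second, since the prefactor $2\pi/\lambda$ is positive and $S$ is affine and increasing in $\cos\theta$, substituting the extremes gives
\begin{equation*}
\frac{2\pi}{\lambda}(n_\eff-1)\le S\le\frac{2\pi}{\lambda}(n_\eff+1)=S_{\max}.
\end{equation*}
The upper bound is attained as $\cos\theta\to 1$, i.e., the user lies far behind the pinch along the waveguide axis, in the direction of $-x$ relative to the pinch. The lower bound is attained as $\cos\theta\to-1$.

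Third, the strict positivity $0<\frac{2\pi}{\lambda}(n_\eff-1)$ follows from the standing assumption $n_\eff>1$ in Section~\ref{sec:model}, together with $\lambda>0$.

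No step is a genuine obstacle. The only care needed is to use the sign convention for $\theta$ consistently with \eqref{eq:dr}, where $\cos\theta$ enters as $(x_n-x_k)/r_{nk}$; the bounds themselves are symmetric in that sign anyway. It is also worth remarking that the positive lower bound means the guided-wave term always dominates, so $S$ never vanishes and a displacement always produces a phase error.
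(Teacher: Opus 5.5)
Your proposal is correct and follows the paper's proof: apply $|\cos\theta|\le1$ to $S=(2\pi/\lambda)(n_\eff+\cos\theta)$, and use $n_\eff>1$ to get positivity of the lower bound. Your justification of $|\cos\theta|\le1$ from $r_{nk}\ge|x_k-x_n|$ is a harmless added detail. So is your remark that $d>0$ means the bounds are only approached as limits, not attained, which is the accurate way to phrase your ``attained as $\cos\theta\to\pm1$''.
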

\begin{IEEEproof}
Apply $|\cos\theta|\le1$ to $S=(2\pi/\lambda)(n_\eff+\cos\theta)$; the
lower bound on the left is positive because $n_\eff>1$.
\end{IEEEproof}

\subsection{Statistical Characterization: From $\Delta\phi$ to $\chi$}\label{sec:stat}
The mismatch $\Delta x$ is random and unknown, so performance can be
judged only through its deterministic rms value
$\sigma_x\triangleq\sqrt{\E[\Delta x^2]}$, and that of the phase error,
$\sigma_\phi\triangleq\sqrt{\E[\Delta\phi^2]}$. By
Lemma~\ref{lem:S}, $\Delta\phi=S\Delta x$ with $S$ fixed by the
geometry, so $\E[\Delta\phi^2]=S^2\E[\Delta x^2]$ and, for any
distribution of $\Delta x$,
\begin{equation}
\sigma_\phi=S\,\sigma_x\;\le\;S_{\max}\,\sigma_x ,
\label{eq:rms}
\end{equation}
the bound following from \eqref{eq:Sbounds}.
A position error rotates the per-pinch channel $g_{nk}$ of \eqref{eq:geff},
written $g$ here, into $\tilde g=g\,e^{j\Delta\phi}$. Its amplitude changes
by the relative amount $\Delta x\cos\theta/r_{nk}$, which is negligible, so
$|\tilde g|=|g|$: no realization loses power. What is lost on averaging is
coherence, which depends on the distribution of $\Delta\phi$. Being a sum
of many small independent contributions, $\Delta x$, hence
$\Delta\phi=S\Delta x$, is taken Gaussian by the central limit theorem, and
zero-mean after calibration; by symmetry, the sign of the exponent in
$\tilde g$ does not matter.

\begin{lemma}[Coherence factor]\label{lem:rms}
For zero-mean Gaussian $\Delta\phi\sim\mathcal N(0,\sigma_\phi^2)$,
\begin{equation}
\chi\triangleq\E\big[e^{j\Delta\phi}\big]=e^{-\sigma_\phi^2/2}\in(0,1],
\label{eq:chi}
\end{equation}
so that the mean channel is $\E[\tilde g]=\chi g$.
\end{lemma}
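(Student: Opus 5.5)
The plan is to compute $\chi=\E[e^{j\Delta\phi}]$ directly as the characteristic function of the Gaussian phase error evaluated at unit argument. I will write the expectation as an integral against the $\mathcal N(0,\sigma_\phi^2)$ density,
\begin{equation*}
\chi=\frac{1}{\sqrt{2\pi}\,\sigma_\phi}\int_{-\infty}^{\infty}e^{j t}\,e^{-t^2/(2\sigma_\phi^2)}\,dt ,
\end{equation*}
and complete the square in the exponent: $jt-t^2/(2\sigma_\phi^2)=-(t-j\sigma_\phi^2)^2/(2\sigma_\phi^2)-\sigma_\phi^2/2$. Pulling out the constant factor $e^{-\sigma_\phi^2/2}$ leaves a Gaussian integral with its centre shifted off the real axis.

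This shift is the only step that needs justification. I will handle it by contour shifting, since the integrand is entire and decays on the vertical edges of a rectangle. Alternatively, I can avoid complex analysis by splitting $e^{jt}=\cos t+j\sin t$. The sine part vanishes by oddness of the density, so zero mean is what makes $\chi$ real. The cosine part can be obtained by differentiating $I(a)=\int\cos(at)\,e^{-t^2/(2\sigma_\phi^2)}dt$ in $a$ and integrating by parts, which gives the ODE $I'(a)=-\sigma_\phi^2 a\,I(a)$. Hence $I(a)=I(0)e^{-\sigma_\phi^2a^2/2}$, and setting $a=1$ gives \eqref{eq:chi}. I would present the ODE route as the rigorous one.

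The range claim then follows immediately. Since $\sigma_\phi^2\ge0$ is finite by \eqref{eq:rms}, $e^{-\sigma_\phi^2/2}$ is strictly positive and at most $1$, with equality exactly when $\sigma_\phi=0$ (no position error).

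For the mean channel, $g$ is deterministic given the geometry. The previous paragraph of the paper also argued that the amplitude is unchanged, so $\tilde g=g\,e^{j\Delta\phi}$. Linearity of expectation then gives $\E[\tilde g]=g\,\E[e^{j\Delta\phi}]=\chi g$. Because $\chi$ is real, the same holds for the sign convention $e^{-j\Delta\phi}$, which matches the remark about symmetry. I expect no real obstacle; the only care needed is justifying the complex shift, which the ODE argument sidesteps.
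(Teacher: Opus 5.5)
Your proof is correct, and its first step, completing the square in the exponent, is exactly the paper's. The two differ in how the shifted integral is evaluated.

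The paper simply asserts that a Gaussian density shifted by the imaginary constant $j\sigma_\phi^2$ still integrates to one. That tacitly relies on the contour shift you mention: Cauchy's theorem plus decay of the integrand on the vertical edges of the rectangle.

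Your preferred real-variable route replaces this step. You split into odd and even parts and use the ODE $I'(a)=-\sigma_\phi^2 a\,I(a)$, obtained by differentiating under the integral and integrating by parts. The differentiation is justified by domination with $|t|e^{-t^2/(2\sigma_\phi^2)}$. This route makes the completed square unnecessary and avoids complex analysis altogether. It also shows explicitly that $\chi$ is real because $\Delta\phi$ is zero-mean, which is what justifies the sign-convention remark before the lemma. The paper's version buys brevity: one line instead of a small ODE argument.

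For the range, the paper argues through monotonic decrease in $\sigma_\phi$ with limits $1$ and $0$. You read $0<e^{-\sigma_\phi^2/2}\le1$ directly off the sign of the exponent; the two are equivalent. You also supply the linearity step $\E[\tilde g]=g\,\E[e^{j\Delta\phi}]=\chi g$, which the paper states but does not spell out.

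One small addition: the density formula presupposes $\sigma_\phi>0$. The endpoint $\sigma_\phi=0$, where $\Delta\phi=0$ almost surely and $\chi=1$ trivially, needs a separate one-line remark. The paper glosses over this point as well.
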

\begin{IEEEproof}
The Gaussian density of $\Delta\phi$ is
$e^{-\phi^2/2\sigma_\phi^2}/\sqrt{2\pi\sigma_\phi^2}$, so
$\E[e^{j\Delta\phi}]=\int_{-\infty}^{\infty}
e^{\,j\phi-\phi^2/2\sigma_\phi^2}/\sqrt{2\pi\sigma_\phi^2}\,d\phi$. Since
$j\phi-\phi^2/2\sigma_\phi^2=-(\phi-j\sigma_\phi^2)^2/2\sigma_\phi^2-\sigma_\phi^2/2$,
we have
\begin{equation*}
\E\big[e^{j\Delta\phi}\big]=e^{-\sigma_\phi^2/2}\int_{-\infty}^{\infty}
\frac{e^{-(\phi-j\sigma_\phi^2)^2/2\sigma_\phi^2}}{\sqrt{2\pi\sigma_\phi^2}}\,d\phi
=e^{-\sigma_\phi^2/2},
\end{equation*}
the last integral being a Gaussian density shifted by an imaginary
constant, which integrates to one. Since $e^{-\sigma_\phi^2/2}$ decreases
monotonically in $\sigma_\phi\ge0$, $\chi=1$ at $\sigma_\phi=0$, ideal
placement, and $\chi\to0$ as $\sigma_\phi\to\infty$: hence the range in
\eqref{eq:chi}.
\end{IEEEproof}

\section{Performance Analysis}\label{sec:impact}

This section analyzes the performance impact of the position error,
bounding the ergodic rate and deriving the closed-form outage probability
of TDMA and NOMA.

\subsection{Ergodic Rate}
\subsubsection{TDMA}
User $k$ is served alone in its slot by all $N$ pinches, with
$|x_n-x_{n'}|\ll r_{nk}$, so that $r_{nk}\approx r_{n'k}$: their channels
$g_{nk}$ then share one gain,
$|g_{nk}|=\beta_k$, $\forall n$, and one sensitivity $S$, and their desired positions give
$\phi_{nk}\equiv\phi_{n'k}\ (\mathrm{mod}\ 2\pi)$, so the $N$ contributions
add coherently~\cite{Ref_Ding2025PinchingPerspective}.

\begin{proposition}[Ergodic rate of TDMA]\label{prop:aN}
For i.i.d. $\Delta\phi_{nk}\sim\mathcal N(0,\sigma_\phi^2)$, the ergodic rate
$\bar R_k\triangleq\E[R_k]$ is bounded as
\begin{equation}
\bar R_k\le\log_2\!\left(1+\frac{NP\beta_k^2}{\sigma_w^2}
\Big[\chi^2+\frac{1-\chi^2}{N}\Big]\right).
\label{eq:aN}
\end{equation}
\end{proposition}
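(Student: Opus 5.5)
The bound follows from Jensen's inequality applied to the concave map $t\mapsto\log_2(1+t)$, followed by an explicit computation of the mean array power using the coherence factor of Lemma~\ref{lem:rms}.

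First write the instantaneous rate in user $k$'s slot. From the received signal $y_k=\sqrt{P/N}\,\tilde g_k\,s+w_k$, the rate is
\begin{equation*}
R_k=\log_2\!\left(1+\frac{P}{N\sigma_w^2}|\tilde g_k|^2\right),
\end{equation*}
with the perturbed effective channel
\begin{equation*}
\tilde g_k=\sum\nolimits_{n=1}^{N}g_{nk}e^{j\Delta\phi_{nk}} .
\end{equation*}
Under the TDMA alignment assumption, every term has modulus $\beta_k$ and the same nominal phase $\phi_k$. Hence
\begin{equation*}
|\tilde g_k|^2=\beta_k^2\Big|\sum\nolimits_{n}e^{j\Delta\phi_{nk}}\Big|^2 .
\end{equation*}
Since $\log_2(1+t)$ is concave in $t$, Jensen's inequality gives
\begin{equation*}
\bar R_k\le\log_2\!\Big(1+\frac{P\beta_k^2}{N\sigma_w^2}\,\E\Big[\Big|\sum\nolimits_n e^{j\Delta\phi_{nk}}\Big|^2\Big]\Big).
\end{equation*}

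Next, compute this second moment by expanding the square into diagonal and cross terms:
\begin{equation*}
\Big|\sum\nolimits_n e^{j\Delta\phi_{nk}}\Big|^2=N+\sum\nolimits_{n\ne n'}e^{j(\Delta\phi_{nk}-\Delta\phi_{n'k})} .
\end{equation*}
The $N$ diagonal terms equal $1$ deterministically. For each of the $N(N-1)$ ordered pairs $n\ne n'$, independence of the $\Delta\phi_{nk}$ and Lemma~\ref{lem:rms} (together with the symmetry $\E[e^{-j\Delta\phi}]=\chi$) give
\begin{equation*}
\E\big[e^{j\Delta\phi_{nk}}\big]\,\E\big[e^{-j\Delta\phi_{n'k}}\big]=\chi^2 .
\end{equation*}
Therefore
\begin{equation*}
\E\big[|\cdot|^2\big]=N+N(N-1)\chi^2=N^2\Big[\chi^2+\frac{1-\chi^2}{N}\Big].
\end{equation*}
Substituting this into the Jensen bound, the factor $P/N$ times $N^2$ yields exactly \eqref{eq:aN}.

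The main point needing care is the model reduction rather than any hard computation. One must justify that, under the alignment assumption, only the phases are random: by the discussion in Section~\ref{sec:stat}, the amplitude perturbation is negligible, so $|g_{nk}|=\beta_k$ is preserved. One must also check that the common nominal phase $\phi_k$ factors out of the modulus. The remaining step is to verify the direction of the inequality, which holds because concavity of $\log_2(1+t)$ makes the rate at the mean SNR an upper bound on the mean rate. As sanity checks, $\chi=1$ recovers full array gain $N^2$, and $\chi\to0$ leaves the incoherent gain $N$.
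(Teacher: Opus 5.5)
Your proposal is correct and follows the paper's proof essentially step for step: Jensen's inequality on the concave $\log_2(1+t)$, then the expected array power $N+N(N-1)\chi^2$ obtained from independence and Lemma~\ref{lem:rms}. The only cosmetic difference is that you factor out the common modulus $\beta_k$ and the aligned nominal phase before expanding the square, whereas the paper keeps the $g_{nk}$ and uses $\sum_{n\ne n'}g_{nk}g_{n'k}^*=N(N-1)\beta_k^2$.
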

\begin{IEEEproof}
Considering $\Delta\phi_{nk}$, \eqref{eq:geff} gives the instantaneous rate as
\begin{equation}
R_k=\log_2\!\left(1+\frac{P}{N\sigma_w^2}\Big|\sum\nolimits_{n=1}^{N}g_{nk}e^{j\Delta\phi_{nk}}\Big|^2\right).
\label{eq:tdma}
\end{equation}
Define the array gain
$a(N)\triangleq\big|\sum_ng_{nk}e^{j\Delta\phi_{nk}}\big|^2/\sum_n|g_{nk}|^2$ and the
error-free signal-to-noise ratio (SNR) $\gamma_0\triangleq(P/N\sigma_w^2)|\sum_ng_{nk}|^2$. As $\sum_n|g_{nk}|^2=N\beta_k^2$ and $|\sum_ng_{nk}|^2=N^2\beta_k^2$,
$\gamma_0=NP\beta_k^2/\sigma_w^2$ and
\eqref{eq:tdma} becomes
$R_k=\log_2(1+\gamma_0\,a(N)/N)$. Since $\log_2(1+x)$ is concave,
Jensen's inequality gives
$\bar R_k\le\log_2(1+\gamma_0\,\E[a(N)]/N)$. By independence and \eqref{eq:chi}, and with
$\sum_{n\ne n'}g_{nk}g_{n'k}^*=|\sum_ng_{nk}|^2-\sum_n|g_{nk}|^2=N(N-1)\beta_k^2$,
we have
\begin{equation}
\begin{split}
\E[a(N)]&=\frac{1}{N\beta_k^2}\sum_{n}\sum_{n'}g_{nk}g_{n'k}^*\,
\E\big[e^{j(\Delta\phi_{nk}-\Delta\phi_{n'k})}\big]\\
&=\frac{1}{N\beta_k^2}\Big(\sum\nolimits_n|g_{nk}|^2
+\sum\nolimits_{n\ne n'}g_{nk}g_{n'k}^*\,\chi^2\Big)\\
&=\frac{N\beta_k^2+N(N-1)\beta_k^2\chi^2}{N\beta_k^2}=1+(N-1)\chi^2 ,
\end{split}
\label{eq:gain}
\end{equation}
which gives \eqref{eq:aN}.
\end{IEEEproof}

\begin{corollary}[Immunity of single-pinch TDMA]\label{cor:immune}
For $N=1$, \eqref{eq:tdma} is rewritten as
\begin{equation*}
R_k=\log_2\!\big(1+\tfrac{P}{\sigma_w^2}|g_{1k}e^{j\Delta\phi_{1k}}|^2\big)
=\log_2\!\big(1+\tfrac{P\beta_k^2}{\sigma_w^2}\big),
\quad\forall\Delta\phi_{1k},
\end{equation*}
and $\bar R_k=\log_2(1+P\beta_k^2/\sigma_w^2)$. So \eqref{eq:aN} holds with equality,
as $\chi^2+(1-\chi^2)/N=1$, and the effect of phase error vanishes. Under the
line-of-sight channel, a single pinch is therefore immune to position
error: its rate and outage probability are exactly those of ideal
placement, as analyzed
in~\cite{Ref_Ding2025PinchingPerspective,Ref_Tyrovolas2026Performance}.
\end{corollary}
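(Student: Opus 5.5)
The plan is to specialize the instantaneous rate expression \eqref{eq:tdma} of Proposition~\ref{prop:aN} to $N=1$ and show that the phase error drops out realization by realization. The equality case of \eqref{eq:aN} should then follow from this, not from any property of Jensen's inequality in general.

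First, set $N=1$ in \eqref{eq:tdma}. The prefactor $P/(N\sigma_w^2)$ becomes $P/\sigma_w^2$, and the sum collapses to the single term $g_{1k}e^{j\Delta\phi_{1k}}$. Its squared modulus is $|g_{1k}|^2|e^{j\Delta\phi_{1k}}|^2$. Since $\Delta\phi_{1k}$ is real, $|e^{j\Delta\phi_{1k}}|=1$ for every realization, and $|g_{1k}|=\beta_k$ by the TDMA setup. Hence
\begin{equation*}
R_k=\log_2\!\big(1+P\beta_k^2/\sigma_w^2\big)
\end{equation*}
deterministically, for all $\Delta\phi_{1k}$.

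This identity relies on the amplitude invariance $|\tilde g|=|g|$ established in Section~\ref{sec:stat}. There the relative amplitude change $\Delta x\cos\theta/r_{nk}$ was argued to be negligible, so the "immunity" claim is exact within the paper's model. It is also consistent with Lemma~\ref{lem:S}, whose higher-order terms only affect the phase. I would state this modeling caveat explicitly rather than hide it.

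Second, since $R_k$ is a constant, its expectation equals that constant, which gives $\bar R_k=\log_2(1+P\beta_k^2/\sigma_w^2)$. Equivalently, in the notation of Proposition~\ref{prop:aN}, we have $a(1)=1$ almost surely, so Jensen's inequality is applied to a degenerate random variable and is tight. On the other side, I would evaluate the bracket in \eqref{eq:aN} at $N=1$:
\begin{equation*}
\chi^2+(1-\chi^2)/1=1,
\end{equation*}
independently of $\chi$, and hence of $\sigma_\phi$ through Lemma~\ref{lem:rms}. The right-hand side of \eqref{eq:aN} then coincides with $\bar R_k$, so the bound holds with equality.

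Finally, for the outage claim, observe that the distribution of $R_k$ under position error equals that under ideal placement, since both are the same function of the user geometry. Thus any outage probability, including one taken over random user locations as in the cited works, is unchanged. The only real obstacle is conceptual rather than computational: justifying that the error acts as a pure phase rotation. I would handle it by appealing directly to the amplitude argument preceding Lemma~\ref{lem:rms}.
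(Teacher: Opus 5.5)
Your proposal is correct and follows the same argument as the paper. At $N=1$, the unit modulus of $e^{j\Delta\phi_{1k}}$ makes $R_k$ deterministic and equal to its ideal-placement value, so $\bar R_k$ equals it, the bracket in \eqref{eq:aN} collapses to $\chi^2+(1-\chi^2)=1$, the bound is tight, and the outage is unchanged; flagging the amplitude-invariance assumption $|\tilde g|=|g|$ from Section~\ref{sec:stat}, which the corollary leaves implicit, is a reasonable addition.
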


\subsubsection{NOMA}
The signals of $K$ users are superposed with powers $\alpha_jP$,
$\sum_j\alpha_j=1$. User $k$ cancels those of the weaker users $j<k$ by
successive interference cancellation (SIC) with the replica $g_k=\sum_ng_{nk}$
of \eqref{eq:geff}. As its actual channel is
$\tilde g_k=\sum_ng_{nk}e^{j\Delta\phi_{nk}}$, its signal-to-interference-plus-noise
ratio (SINR) becomes
\begin{equation}
\SINR_k=\frac{\alpha_k|\tilde g_k|^2}{N\sigma_w^2/P+|\tilde g_k|^2\sum_{j>k}\alpha_j
+|\tilde g_k-g_k|^2\sum_{j<k}\alpha_j} .
\label{eq:sinrK}
\end{equation}
The position error enters \eqref{eq:sinrK} only through $|\tilde g_k|$
and $|\tilde g_k-g_k|$, which do not depend on $K$. It therefore suffices
to consider a pair, $K=2$, of a strong user $s$ and a weak user $w$
with $\alpha_s+\alpha_w=1$, $\alpha_s<\alpha_w$. Its weak user has
$\SINR_w=\alpha_w|\tilde g_w|^2/(N\sigma_w^2/P+\alpha_s|\tilde g_w|^2)$,
where the error enters only through
$|\tilde g_w|^2=|\sum_ng_{nw}e^{j\Delta\phi_{nw}}|^2$, the factor already treated
for TDMA in \eqref{eq:tdma} and \eqref{eq:gain}. Therefore, we only analyze the impact of position
errors on the strong user. For $N\ge2$, the strong user's signal
$|\tilde g_s|^2=|\sum_ng_{ns}e^{j\Delta\phi_{ns}}|^2$ and residual
$|\tilde g_s-g_s|^2=|\sum_ng_{ns}(e^{j\Delta\phi_{ns}}-1)|^2$ are both random due
to the $N$ phase errors. Its ergodic rate is therefore analytically
intractable.
In contrast, at $N=1$ the signal $|g_se^{j\Delta\phi_{1s}}|^2=|g_s|^2$ is not
random, so we derive a closed-form bound for this case. It is also
practical, since pairing two far-apart users lets one pinch dominate each
user's channel~\cite{Ref_Ding2025PinchingPerspective}.

\begin{proposition}[Ergodic rate of NOMA]\label{prop:sic}
For a single pinch and $\Delta\phi_{1s}\sim\mathcal N(0,\sigma_\phi^2)$, the
ergodic rate $\bar R_s\triangleq\E[R_s]$ of the strong user is bounded as
\begin{equation}
\bar R_s\ge\log_2\!\Big(1+\frac{\alpha_sP|g_s|^2}
{\sigma_w^2+2(1-\chi)\alpha_wP|g_s|^2}\Big).
\label{eq:floor}
\end{equation}
\end{proposition}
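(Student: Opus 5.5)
Specializing \eqref{eq:sinrK} to $K=2$ and $N=1$, the strong user $s$ has no weaker-indexed interference left after SIC except the residual of the weak user's signal, and no stronger user above it. With $\tilde g_s=g_se^{j\Delta\phi_{1s}}$, the SINR becomes
\begin{equation*}
\SINR_s=\frac{\alpha_s|g_s|^2}{\sigma_w^2/P+\alpha_w|g_s|^2\,|e^{j\Delta\phi_{1s}}-1|^2},
\end{equation*}
using $|\tilde g_s|^2=|g_s|^2$, as in Corollary~\ref{cor:immune}, and $|\tilde g_s-g_s|^2=|g_s|^2|e^{j\Delta\phi_{1s}}-1|^2$. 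The only random quantity is then $Z\triangleq|e^{j\Delta\phi_{1s}}-1|^2=2(1-\cos\Delta\phi_{1s})$. So $R_s=\log_2(1+A/(B+CZ))$ with deterministic $A=\alpha_s|g_s|^2$, $B=\sigma_w^2/P$ and $C=\alpha_w|g_s|^2$.

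The key step is convexity. The map $z\mapsto\log_2(1+A/(B+Cz))=\log_2(B+Cz+A)-\log_2(B+Cz)$ is convex in $z\ge0$. Its second derivative is proportional to $-C^2/(B+Cz+A)^2+C^2/(B+Cz)^2>0$. Jensen's inequality therefore gives $\bar R_s\ge\log_2(1+A/(B+C\,\E[Z]))$, with the inequality now pointing the other way from Proposition~\ref{prop:aN}. This is why the result is a lower bound.

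It remains to compute $\E[Z]=2(1-\E[\cos\Delta\phi_{1s}])$. By Lemma~\ref{lem:rms}, $\E[e^{j\Delta\phi_{1s}}]=\chi$ is real. Hence $\E[\cos\Delta\phi_{1s}]=\mathrm{Re}\,\chi=\chi$, since the sine part vanishes by symmetry, and so $\E[Z]=2(1-\chi)$. Substituting and multiplying numerator and denominator by $P$ yields exactly \eqref{eq:floor}.

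The main obstacle is only to justify the convexity claim carefully, including the sign of the second derivative for all $z\ge0$. Everything else follows from Lemma~\ref{lem:rms}. As a remark, the bound exhibits the floor: as $P\to\infty$ it tends to $\log_2(1+\alpha_s/(2(1-\chi)\alpha_w))$.
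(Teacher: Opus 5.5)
Your proof is correct and follows the paper's argument essentially step for step. You specialize \eqref{eq:sinrK} to $K=2$, $N=1$, so the only randomness is the residual $2-2\cos\Delta\phi_{1s}$ (the paper's $\Omega$, your $Z$); you show $R_s$ is convex in it via the second derivative, apply Jensen, and use Lemma~\ref{lem:rms} to get $\E[Z]=2(1-\chi)$, with the only cosmetic difference being that you write the noise term as $\sigma_w^2/P$ and restore $P$ at the end.
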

\begin{IEEEproof}
At $K=2$ and $N=1$, $\tilde g_s=g_se^{j\Delta\phi_{1s}}$ gives
$|\tilde g_s|=|g_s|$ and $|\tilde g_s-g_s|^2=|g_s|^2\Omega$, where
\begin{equation}
\Omega\triangleq\big|e^{j\Delta\phi_{1s}}-1\big|^2=2-2\cos\Delta\phi_{1s}
=4\sin^2\!\frac{\Delta\phi_{1s}}{2}
\label{eq:dbar}
\end{equation}
is the residual fraction of the cancelled power, zero under ideal
placement. Then \eqref{eq:sinrK} becomes
\begin{equation}
\SINR_s=\frac{\alpha_sP|g_s|^2}{\sigma_w^2+\alpha_wP|g_s|^2\,\Omega},
\qquad R_s=\log_2(1+\SINR_s).
\label{eq:noma}
\end{equation}
By \eqref{eq:dbar} and \eqref{eq:chi},
$\bar\Omega\triangleq\E[\Omega]=2-2\,\E[\cos\Delta\phi_{1s}]=2(1-\chi)$.
With $u\triangleq\sigma_w^2+\alpha_wP|g_s|^2\Omega$,
$\partial^2R_s/\partial\Omega^2\propto u^{-2}-(u+\alpha_sP|g_s|^2)^{-2}>0$,
so $R_s$ is convex in $\Omega$. Jensen's inequality then gives
$\bar R_s\ge\log_2\!\big(1+\alpha_sP|g_s|^2/(\sigma_w^2+\alpha_wP|g_s|^2\bar\Omega)\big)$,
which is \eqref{eq:floor}.
\end{IEEEproof}

\subsection{Outage Probability}\label{sec:outage}
We follow the setting
of~\cite{Ref_Ding2025PinchingPerspective,Ref_Tyrovolas2026Performance},
where the user is uniform in a square of side $D$, and the serving pinches
sit above it, $x_n\approx x_k$. Hence
$\cos\theta_{nk}=0$, $S=2\pi n_\eff/\lambda$ is deterministic, and
$\beta_k^2=\eta/r_{nk}^2=\eta/(y^2+d^2)$. The error-free SNR
$\gamma_0=NP\beta_k^2/\sigma_w^2$ is then a function of $y$, i.e.,
$\gamma_0(y)=NC_0/(y^2+d^2)$, $C_0\triangleq\eta P/\sigma_w^2$. A
closed-form outage exists only when it is decided by a single
variable, whose tail probability is the Gaussian tail function
$Q(\cdot)$: for $N\le2$ under TDMA,
whose SNR depends on the phase difference only, and for $N=1$ under NOMA,
whose residual depends on the single phase.

\subsubsection{TDMA}
At $N=1$, by Corollary~\ref{cor:immune}, the outage analysis of ideal
placement applies for every $\sigma_x$.
For a target SINR $\SINR_{\rm th}$, the outage probability
is $P_{\rm out}=1-(2/D)\sqrt{C_0/\SINR_{\rm th}-d^2}$
for
$d^2\le C_0/\SINR_{\rm th}\le d^2+D^2/4$~\cite[Prop.~2]{Ref_Tyrovolas2026Performance}.

\begin{proposition}[Outage of TDMA]\label{prop:tdma_out}
For $N=2$ pinches with i.i.d. $\Delta\phi_{nk}\sim\mathcal N(0,\sigma_\phi^2)$,
the conditional outage probability is
\begin{equation}
P_{\rm out}(y)=\begin{cases}
2Q\Big(\dfrac{\phi_\rho}{\sqrt2\,\sigma_\phi}\Big)-2Q\Big(\dfrac{2\pi-\phi_\rho}{\sqrt2\,\sigma_\phi}\Big)+\epsilon, & \rho<1,\\
1, & \rho\ge1,
\end{cases}
\label{eq:pout_tdma2}
\end{equation}
with $\rho\triangleq\SINR_{\rm th}/\gamma_0(y)$,
$\phi_\rho\triangleq\arccos(2\rho-1)$,
$0\le\epsilon\le2Q((2\pi+\phi_\rho)/(\sqrt2\,\sigma_\phi))$ and
$Q(x)\triangleq\frac{1}{\sqrt{2\pi}}\int_x^\infty e^{-t^2/2}\,dt$.
\end{proposition}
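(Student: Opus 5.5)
For $N=2$, the plan is to reduce the SNR to a function of the single phase difference $\delta\triangleq\Delta\phi_{1k}-\Delta\phi_{2k}$. Then we write the outage event as a union of intervals of $\delta$ and evaluate Gaussian probabilities. Start from \eqref{eq:tdma} with $|g_{1k}|=|g_{2k}|=\beta_k$ and the desired phases aligned (mod $2\pi$), as in the setting of Proposition~\ref{prop:aN}. Then
\[
\Big|\sum_{n=1}^{2}g_{nk}e^{j\Delta\phi_{nk}}\Big|^2=\beta_k^2\,|e^{j\Delta\phi_{1k}}+e^{j\Delta\phi_{2k}}|^2=2\beta_k^2(1+\cos\delta),
\]
so the instantaneous SNR is $\gamma_0(y)\,a(2)/2=\gamma_0(y)(1+\cos\delta)/2$, with $\gamma_0(y)=2C_0/(y^2+d^2)$. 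Outage occurs iff $\gamma_0(1+\cos\delta)/2<\SINR_{\rm th}$, i.e., iff $\cos\delta<2\rho-1$. If $\rho\ge1$, then $2\rho-1\ge1$, so the event holds almost surely (boundary $\cos\delta=1$ has probability zero), giving $P_{\rm out}=1$. If $\rho<1$, then $2\rho-1\in[-1,1)$ and $\phi_\rho=\arccos(2\rho-1)\in(0,\pi]$ is well defined. The outage set is $\{\delta:\ \delta\bmod 2\pi\in(\phi_\rho,2\pi-\phi_\rho)\}$, i.e., the union over $m\in\mathbb Z$ of the intervals $(\phi_\rho+2\pi m,\,2\pi-\phi_\rho+2\pi m)$.

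Next we need the law of $\delta$. Since the $\Delta\phi_{nk}$ are i.i.d. $\mathcal N(0,\sigma_\phi^2)$, $\delta\sim\mathcal N(0,2\sigma_\phi^2)$, which explains the scale $\sqrt2\,\sigma_\phi$ in every $Q$ argument. Write $F(t)\triangleq\Pr(\delta>t)=Q(t/(\sqrt2\sigma_\phi))$. By symmetry of $\delta$, the intervals for $m$ and $-1-m$ are mirror images. So it suffices to sum over $m\ge0$ and double:
\[
P_{\rm out}=2\sum_{m\ge0}\big[F(\phi_\rho+2\pi m)-F(2\pi-\phi_\rho+2\pi m)\big].
\]
The $m=0$ term gives exactly $2Q(\phi_\rho/\sqrt2\sigma_\phi)-2Q((2\pi-\phi_\rho)/\sqrt2\sigma_\phi)$. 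The remainder $\epsilon=2\sum_{m\ge1}[F(\phi_\rho+2\pi m)-F(2\pi-\phi_\rho+2\pi m)]$ is what must be bounded.

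The main, though mild, obstacle is bounding $\epsilon$. Nonnegativity holds because $\phi_\rho\le\pi$ implies $\phi_\rho+2\pi m\le 2\pi-\phi_\rho+2\pi m$ and $F$ is decreasing, so each bracket is $\ge0$. For the upper bound, note that for $m\ge1$ the intervals $(\phi_\rho+2\pi m,\,2\pi-\phi_\rho+2\pi m)$ are disjoint and all lie in $(2\pi+\phi_\rho,\infty)$. Hence the sum of their probabilities is at most $\Pr(\delta>2\pi+\phi_\rho)=Q((2\pi+\phi_\rho)/\sqrt2\sigma_\phi)$. After doubling, $\epsilon\le 2Q((2\pi+\phi_\rho)/\sqrt2\sigma_\phi)$. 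This completes \eqref{eq:pout_tdma2}, conditioned on $y$ through $\rho=\SINR_{\rm th}/\gamma_0(y)$.
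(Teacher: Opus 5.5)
Your proof is correct and follows essentially the same route as the paper. You reduce the SNR to $\tfrac{\gamma_0(y)}{2}(1+\cos\zeta)$ with $\zeta\sim\mathcal N(0,2\sigma_\phi^2)$, write the outage event as $2\pi$-periodic intervals, fold the negative intervals onto the positive ones by symmetry, isolate the $m=0$ term, and bound the remainder by $\Pr[|\zeta|>2\pi+\phi_\rho]$; your explicit check that each remaining bracket is nonnegative (via $\phi_\rho\le\pi$) is a small step the paper leaves implicit.
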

\begin{IEEEproof}
Since
$\Delta\phi_{1k},\Delta\phi_{2k}
\overset{\mathrm{i.i.d.}}{\sim}\mathcal N(0,\sigma_\phi^2)$,
their difference
$\zeta\triangleq\Delta\phi_{1k}-\Delta\phi_{2k}$
follows $\mathcal N(0,2\sigma_\phi^2)$.
Using $g_{1k}=g_{2k}$ and factoring out
$g_{1k}e^{j\Delta\phi_{2k}}$ from \eqref{eq:tdma} at $N=2$ gives
\[
\gamma(y)
=\frac{P\beta_k^2}{2\sigma_w^2}|1+e^{j\zeta}|^2
=\frac{\gamma_0(y)}{2}(1+\cos\zeta),
\]
where $\gamma_0(y)=2P\beta_k^2/\sigma_w^2$ and
$|1+e^{j\zeta}|^2=2(1+\cos\zeta)$. Therefore,
\[
P_{\rm out}(y)
\triangleq\Pr[\gamma(y)<\gamma_{\rm th}]
=\Pr[\cos\zeta<2\rho-1].
\]
For $\rho>1$, $\cos\zeta\le1<2\rho-1$ holds surely;
for $\rho=1$, $\cos\zeta<1$ fails only at
$\zeta\in2\pi\mathbb Z$, a set of probability zero.
Thus, $P_{\rm out}(y)=1$ for $\rho\ge1$,
the second case of \eqref{eq:pout_tdma2}.

For $0<\rho<1$, $\phi_\rho\in(0,\pi)$, and on $[0,2\pi]$ outage occurs
precisely when $\phi_\rho<\zeta<2\pi-\phi_\rho$.
Periodicity generates the positive outage intervals by shifts of
$2\pi m$, $m\ge0$; their negative counterparts have equal
probabilities by symmetry. Hence,
\[
\begin{aligned}
P_{\rm out}(y)
&=2\sum_{m=0}^{\infty}
\Pr\big[
2\pi m+\phi_\rho<\zeta<2\pi(m+1)-\phi_\rho
\big]\\
&=2\sum_{m=0}^{\infty}
\big[
\Pr[\zeta>2\pi m+\phi_\rho]\\
&\qquad\qquad
-\Pr[\zeta>2\pi(m+1)-\phi_\rho]
\big]\\
&=2\sum_{m=0}^{\infty}
\Big[
Q\Big(\frac{2\pi m+\phi_\rho}{\sqrt2\,\sigma_\phi}\Big)
-
Q\Big(\frac{2\pi(m+1)-\phi_\rho}{\sqrt2\,\sigma_\phi}\Big)
\Big],
\end{aligned}
\]
where the last equality follows from
$\zeta\sim\mathcal N(0,2\sigma_\phi^2)$.

Separating the $m=0$ term yields the first case of \eqref{eq:pout_tdma2},
with $\epsilon$ collecting the remaining intervals.
These intervals and their negative counterparts lie in
$\{|\zeta|>2\pi+\phi_\rho\}$; therefore,
\[
0\le\epsilon
\le\Pr[|\zeta|>2\pi+\phi_\rho]
=2Q\Big(\frac{2\pi+\phi_\rho}{\sqrt2\,\sigma_\phi}\Big).\IEEEQEDhereeqn
\]
\end{IEEEproof}

Over $y$ uniform on $[-D/2,D/2]$, the outage probability is
$P_{\rm out}=\frac1D\int_{-D/2}^{D/2}P_{\rm out}(y)\,dy=\frac2D\int_0^{D/2}P_{\rm out}(y)\,dy$,
as $P_{\rm out}(y)$ is even in $y$; at $N=2$ it has no closed form and is
evaluated numerically.

\subsubsection{NOMA}

For $N=1$, with $P|g_s|^2/\sigma_w^2=\gamma_0(y)$, \eqref{eq:noma} becomes
\begin{equation}
\gamma_s(y)
=\frac{\alpha_s\gamma_0(y)}
{1+\alpha_w\gamma_0(y)\Omega},
\qquad
\gamma_0(y)=\frac{C_0}{y^2+d^2}.
\label{eq:sinr_real}
\end{equation}

\begin{proposition}[Outage of NOMA]\label{prop:noma_out}
Let $\Delta\phi_{1s}\sim\mathcal N(0,\sigma_\phi^2)$, with
$\sigma_\phi>0$.

(a) Define the residual tolerance
\begin{equation}
c(y)\triangleq
\frac{\alpha_s}{\gamma_{\rm th}\alpha_w}
-\frac{y^2+d^2}{\alpha_w C_0}.
\label{eq:c}
\end{equation}
Then $P_{\rm out}(y)=1$ for $c(y)\le0$, and
$P_{\rm out}(y)=0$ for $c(y)\ge4$.
For $0<c(y)<4$, write $c=c(y)$ and define
$\phi_c\triangleq2\arcsin(\sqrt c/2)\in(0,\pi)$.
The conditional outage satisfies
\begin{equation}
\begin{aligned}
P_{\rm out}(y)
&=2Q\!\left(\frac{\phi_c}{\sigma_\phi}\right)
-2Q\!\left(\frac{2\pi-\phi_c}{\sigma_\phi}\right)
+\epsilon,\\
0\le\epsilon
&\le2Q\!\left(\frac{2\pi+\phi_c}{\sigma_\phi}\right).
\end{aligned}
\label{eq:pout_cond}
\end{equation}

(b) When $c(0)>0$, define
$Y_1=\sqrt{\alpha_s C_0/\gamma_{\rm th}-d^2}$, so that
$c(y)=(Y_1^2-y^2)/(\alpha_wC_0)>0$ exactly for $|y|<Y_1$.
For $Y_1\le D/2$ and $y\sim{\rm Unif}[-D/2,D/2]$,
the spatially averaged outage is bounded by
\begin{equation}
\begin{aligned}
P_{\rm out}
&\le1-\frac{2Y_1}{D}F(\kappa),\qquad
\kappa
\triangleq\frac{Y_1^2}{2\sigma_\phi^2\alpha_w C_0}
=\frac{c(0)}{2\sigma_\phi^2},\\
F(\kappa)
&\triangleq
\sqrt{\frac{\pi\kappa}{4}}\,e^{-\kappa/2}
\left[
I_0\!\left(\frac{\kappa}{2}\right)
+I_1\!\left(\frac{\kappa}{2}\right)
\right],
\end{aligned}
\label{eq:pout_avg}
\end{equation}
where $I_\nu$ is the modified Bessel function of the first kind
and $0<F(\kappa)<1$.
\end{proposition}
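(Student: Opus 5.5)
The plan is to reduce the outage event to a threshold event on the residual $\Omega$ of \eqref{eq:dbar}, then on the single Gaussian phase $\Delta\phi_{1s}$, and to reuse the periodic-interval argument of Proposition~\ref{prop:tdma_out}.

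\emph{Part (a).} From \eqref{eq:sinr_real}, $\gamma_s(y)<\gamma_{\rm th}$ is equivalent to $\alpha_s\gamma_0<\gamma_{\rm th}(1+\alpha_w\gamma_0\Omega)$. Dividing by $\gamma_{\rm th}\alpha_w\gamma_0>0$ and inserting $\gamma_0(y)=C_0/(y^2+d^2)$ gives exactly $\Omega>c(y)$, with $c(y)$ as in \eqref{eq:c}. Since $\Omega=4\sin^2(\Delta\phi_{1s}/2)\in[0,4]$, the extreme cases follow:
\begin{itemize}
\item If $c\ge4$, the event $\Omega>c$ is impossible, so $P_{\rm out}(y)=0$.
\item If $c<0$, the event holds surely.
\item If $c=0$, the event fails only on $\Delta\phi_{1s}\in2\pi\mathbb Z$, a null set because $\sigma_\phi>0$.
\end{itemize}
For $0<c<4$, $\Omega>c$ is equivalent to $|\sin(\Delta\phi_{1s}/2)|>\sqrt c/2$. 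On $[0,2\pi]$ this is the interval $\phi_c<\Delta\phi_{1s}<2\pi-\phi_c$, extended periodically. I then copy the proof of Proposition~\ref{prop:tdma_out}: sum over the shifts $2\pi m$, use symmetry for negative $m$, isolate $m=0$, and bound the remaining mass by $\Pr[|\Delta\phi_{1s}|>2\pi+\phi_c]$. The only change is that the standard deviation is $\sigma_\phi$ instead of $\sqrt2\sigma_\phi$, which yields \eqref{eq:pout_cond}.

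\emph{Part (b).} Using $\alpha_s C_0/\gamma_{\rm th}-d^2=Y_1^2$, one gets $c(y)=(Y_1^2-y^2)/(\alpha_wC_0)$. Hence $P_{\rm out}(y)=1$ for $|y|\ge Y_1$, and this region lies inside $[-D/2,D/2]$ because $Y_1\le D/2$. For $|y|<Y_1$, I lower-bound the success probability via $4\sin^2(x/2)\le x^2$. This inequality gives $\{\Delta\phi_{1s}^2<c\}\subseteq\{\Omega<c\}$ (up to null sets), so
\[
1-P_{\rm out}(y)\ge\Pr[\Delta\phi_{1s}^2<c(y)]=\mathrm{erf}\big(\sqrt{c(y)/(2\sigma_\phi^2)}\big).
\]
Averaging over $y$ and substituting $y=Y_1t$, so that $c(y)/(2\sigma_\phi^2)=\kappa(1-t^2)$, gives
\[
P_{\rm out}\le1-\frac{2Y_1}{D}\int_0^1\mathrm{erf}\big(\sqrt{\kappa(1-t^2)}\big)\,dt .
\]
It remains to prove $F(\kappa)=\int_0^1\mathrm{erf}(\sqrt{\kappa(1-t^2)})\,dt$. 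The bounds $0<F<1$ are then immediate, because the integrand lies in $(0,1)$ on $[0,1)$.

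\emph{Main obstacle: the Bessel identity.} I will verify it by differentiating in $\kappa$. Both sides vanish at $\kappa=0$, so it suffices to match derivatives. For the integral,
\[
\frac{d}{d\kappa}\int_0^1\mathrm{erf}\big(\sqrt{\kappa(1-t^2)}\big)\,dt=\frac{1}{\sqrt{\pi\kappa}}\int_0^1\sqrt{1-t^2}\,e^{-\kappa(1-t^2)}\,dt .
\]
Substituting $t=\cos\psi$ turns the right-hand integral into $\int_0^{\pi/2}\sin^2\psi\,e^{-\kappa\sin^2\psi}\,d\psi$. Writing $\sin^2\psi=(1-\cos2\psi)/2$, this equals
\[
\frac{e^{-\kappa/2}}{2}\int_0^{\pi/2}(1-\cos2\psi)\,e^{(\kappa/2)\cos2\psi}\,d\psi=\frac{\pi}{4}e^{-\kappa/2}\big[I_0(\kappa/2)-I_1(\kappa/2)\big],
\]
using the integral representations $I_\nu(z)=\frac1\pi\int_0^\pi e^{z\cos u}\cos\nu u\,du$. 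On the other side, I differentiate $F$ with $I_0'=I_1$ and $I_1'(z)=I_0(z)-I_1(z)/z$ and check that the result equals $\frac{\sqrt\pi}{4\sqrt\kappa}e^{-\kappa/2}[I_0(\kappa/2)-I_1(\kappa/2)]$. This bookkeeping is where I expect errors. If it does not close, the fallback is to compare power series in $\kappa$ term by term.
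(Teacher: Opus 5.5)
Your proposal is correct. Part (a), and the relaxation $\Omega\le\Delta\phi_{1s}^2$ in part (b), coincide with the paper's argument. The routes differ only in how the averaged bound is brought to closed form.

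The paper works on the outage side. It integrates $J=\int_0^{Y_1}2Q\big(\sqrt{Y_1^2-y^2}/v\big)\,dy$, with $v=\sigma_\phi\sqrt{\alpha_wC_0}$, by parts in $y$ using $2Q(0)=1$, and then substitutes $y=Y_1\sin t$. The change $u=\pi-2t$ maps $\int_0^{\pi/2}\sin^2t\,e^{-\kappa\cos^2t}\,dt$ directly onto $\frac{\pi}{4}e^{-\kappa/2}[I_0(\kappa/2)+I_1(\kappa/2)]$. So the paper derives $F$ constructively from the integral representation of $I_\nu$ alone.

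You instead differentiate the success-side integral $\int_0^1\mathrm{erf}\big(\sqrt{\kappa(1-t^2)}\big)\,dt$ in $\kappa$, which produces $I_0-I_1$. You then verify the given $F$ by matching derivatives via $I_0'=I_1$ and $I_1'(z)=I_0(z)-I_1(z)/z$. That bookkeeping does close, since $\frac{d}{d\kappa}\big[\sqrt{\kappa}\,e^{-\kappa/2}\big(I_0+I_1\big)(\kappa/2)\big]=\frac{e^{-\kappa/2}}{2\sqrt{\kappa}}\big(I_0-I_1\big)(\kappa/2)$, so no power-series fallback is needed. Both derivatives blow up like $\kappa^{-1/2}$ at the origin. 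So phrase the conclusion as two functions that are continuous on $[0,\infty)$, differentiable on $(0,\infty)$ with equal derivatives, and equal at $0$.

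Your route is a verification that presupposes the target and needs the Bessel recurrences, whereas the paper's discovers the form. In exchange, your derivative is proportional to $I_0-I_1>0$, which shows at once that $F$ increases in $\kappa$. Your $\mathrm{erf}$ representation also gives $F\to1$ as $\kappa\to\infty$ by bounded convergence, without the Bessel asymptotics the paper invokes.
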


\begin{IEEEproof}
(a) From \eqref{eq:sinr_real},
\begin{equation*}
\begin{aligned}
P_{\rm out}(y)
&=\Pr[\gamma_s(y)<\gamma_{\rm th}]\\
&=\Pr\!\left[
\Omega>
\frac{\alpha_s}{\gamma_{\rm th}\alpha_w}
-\frac{1}{\alpha_w\gamma_0(y)}
\right]\\
&=\Pr[\Omega>c(y)].
\end{aligned}
\end{equation*}
The second equality multiplies by the positive denominator and
divides by $\gamma_{\rm th}\alpha_w\gamma_0(y)>0$.
Since $0\le\Omega\le4$ by \eqref{eq:dbar} and $\Pr[\Omega=0]=0$,
the cases $c\le0$ and $c\ge4$ follow.
For $0<c<4$,
\begin{equation*}
\Omega>c
\quad\Longleftrightarrow\quad
\cos\Delta\phi_{1s}<1-\frac c2=\cos\phi_c.
\end{equation*}
This gives the same periodic outage intervals as in
Proposition~\ref{prop:tdma_out}.
Applying its Gaussian-tail calculation with $\phi_c$ and
$\sigma_\phi$ in place of $\phi_\rho$ and
$\sqrt2\,\sigma_\phi$, respectively, yields
\eqref{eq:pout_cond}.

(b) Since $|\sin u|\le|u|$,
\begin{equation*}
\Omega=4\sin^2(\Delta\phi_{1s}/2)
\le(\Delta\phi_{1s})^2.
\end{equation*}
Thus, for $0\le y\le Y_1$, where
$c(y)=(Y_1^2-y^2)/(\alpha_w C_0)\ge0$,
\begin{equation*}
\begin{aligned}
P_{\rm out}(y)
&\le\Pr[(\Delta\phi_{1s})^2>c(y)]\\
&=2Q\!\left(\frac{\sqrt{c(y)}}{\sigma_\phi}\right)
=2Q\!\left(\frac{\sqrt{Y_1^2-y^2}}{v}\right),
\end{aligned}
\end{equation*}
with $v\triangleq\sigma_\phi\sqrt{\alpha_w C_0}$.
For $Y_1<y\le D/2$, $c(y)<0$ and $P_{\rm out}(y)=1$.
Averaging over $y$ therefore gives
\begin{equation*}
\begin{aligned}
P_{\rm out}
&\le1-\frac{2Y_1}{D}+\frac{2J}{D},\\
J&\triangleq\int_0^{Y_1}
2Q\!\left(\frac{\sqrt{Y_1^2-y^2}}{v}\right)\,dy.
\end{aligned}
\end{equation*}
Integrating by parts, with $2Q(0)=1$ and
$Q'(x)=-e^{-x^2/2}/\sqrt{2\pi}$, and then substituting $y=Y_1\sin t$,
with $\kappa=Y_1^2/(2v^2)$, yields
\begin{equation*}
\begin{aligned}
J
&=Y_1-\frac{2}{v\sqrt{2\pi}}
\int_0^{Y_1}
\frac{y^2e^{-(Y_1^2-y^2)/(2v^2)}}
{\sqrt{Y_1^2-y^2}}\,dy\\
&=Y_1-\frac{2Y_1^2}{v\sqrt{2\pi}}
\int_0^{\pi/2}\sin^2t\,e^{-\kappa\cos^2t}\,dt.
\end{aligned}
\end{equation*}
With $u=\pi-2t$, so that $2\sin^2t=1+\cos u$ and
$2\cos^2t=1-\cos u$, the remaining integral becomes
\begin{equation*}
\begin{aligned}
\int_0^{\pi/2}\sin^2t\,e^{-\kappa\cos^2t}\,dt
&=\frac{e^{-\kappa/2}}4
\int_0^\pi(1+\cos u)e^{(\kappa/2)\cos u}\,du\\
&=\frac{\pi e^{-\kappa/2}}4
\left[
I_0\!\left(\frac\kappa2\right)
+I_1\!\left(\frac\kappa2\right)
\right],
\end{aligned}
\end{equation*}
where the last equality follows from the integral
representations of $I_0$ and $I_1$,
$\pi I_\nu(z)=\int_0^\pi e^{z\cos u}\cos(\nu u)\,du$.
Consequently, with $Y_1/v=\sqrt{2\kappa}$, $J=Y_1[1-F(\kappa)]$, which gives
\eqref{eq:pout_avg}.
Moreover, $0<J<Y_1$, as $0<2Q(x)<1$ for $x>0$, implies $0<F(\kappa)<1$.
\end{IEEEproof}

For $c(0)\le0$, every location is in outage and
$P_{\rm out}=1$.
For $0<Y_1\le D/2$, $F(\kappa)\to1$ as
$\sigma_\phi\to0$, as $\kappa\to\infty$ and $I_\nu(z)\sim e^z/\sqrt{2\pi z}$, so \eqref{eq:pout_avg} recovers the
ideal-placement outage $1-2Y_1/D$, which is
\cite[Prop.~2]{Ref_Tyrovolas2026Performance} with $C_0$ replaced by
$\alpha_sC_0$.
Thus, the served half-width shrinks from $Y_1$ to no less than
$Y_1F(\kappa)$. As $\sigma_\phi\to\infty$, $\kappa\to0$ and $F(\kappa)\to0$.
The parameter $\kappa=c(0)/(2\sigma_\phi^2)$ compares the
maximum residual tolerance with the phase-error variance.

For $Y_1>D/2$, where the target is attainable everywhere, the spatial outage satisfies
\begin{equation*}
P_{\rm out}(0)
\le
P_{\rm out}
=\frac2D\int_0^{D/2}P_{\rm out}(y)\,dy
\le
P_{\rm out}(D/2),
\end{equation*}
because $c(y)$ decreases with $|y|$, and hence
$P_{\rm out}(y)$ is nondecreasing in $|y|$.

\section{Numerical Results}\label{sec:num}
The setting of~\cite{Ref_Ding2025PinchingPerspective} is used: $f_c=28$\,GHz,
$n_\eff=1.4$, $d=3$\,m, $D=10$\,m, $\sigma_w^2=-90$\,dBm and
$\eta=c^2/(4\pi f_c)^2$. The pinches sit above the user, so
$S=2\pi n_\eff/\lambda=0.821$\,rad/mm by \eqref{eq:dphi}, and
$\sigma_\phi=S\sigma_x$ in every closed form; the $N$ pinches of TDMA are
spaced by about $\lambda_g$ at positions that align their phases, as in
Section~\ref{sec:impact}. Under NOMA, $\alpha_s=0.2$ and $\alpha_w=0.8$. The
Monte Carlo simulations use the exact geometry: each pinch is displaced by
$\Delta x_n\sim\mathcal N(0,\sigma_x^2)$, its phase and gain are recomputed
from \eqref{eq:phi} and \eqref{eq:geff}, and $10^6$ draws are taken per point.
For displacements up to $3$\,mm, the exact phase error differs from
$S\Delta x$ by less than $10^{-3}$\,rad and the path gain changes by less than
$2\times10^{-6}$, which confirms Lemma~\ref{lem:S}.

\begin{figure}[!t]
\centering
\includegraphics[width=\columnwidth]{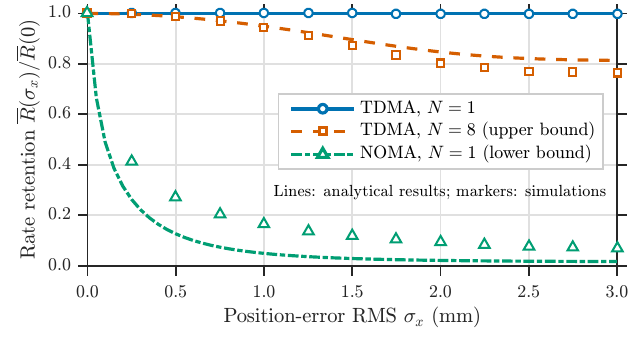}
\caption{Rate retention versus the position error. Lines:
Corollary~\ref{cor:immune} for single-pinch TDMA, the upper bound
\eqref{eq:aN} for eight-pinch TDMA, and the lower bound \eqref{eq:floor} for
the NOMA strong user; markers: exact-geometry Monte Carlo.}
\label{fig:retention}
\end{figure}

We first evaluate the impact of the position error on three schemes: TDMA
with $N=1$, TDMA with $N=8$, and the NOMA strong user. To compare their
sensitivity on one scale, Fig.~\ref{fig:retention} plots the rate retention
$\bar R(\sigma_x)/\bar R(0)$, i.e., the ergodic rate with position error
divided by the error-free rate. At $P=20$\,dBm and $y=0$, where the
single-pinch SNR is $39$\,dB, the error-free rates are $12.98$, $15.98$ and
$10.66$\,bit/s/Hz for the three schemes. Each curve therefore begins at one,
the error-free case, and falls as the position error grows, except for
single-pinch TDMA, which stays at one, as Corollary~\ref{cor:immune} states
and the exact-geometry simulation confirms; the eight-pinch marker at
$\sigma_x=2$\,mm, for example, reads $0.80$: the position error has removed
$20\%$ of the rate. The lines
divide the bounds \eqref{eq:aN} and \eqref{eq:floor} by the error-free rate,
which keeps them upper and lower bounds on the retention. Eight-pinch TDMA
retains $99.7\%$ at $\sigma_x=0.25$\,mm and $94\%$
at $1$\,mm. The bound \eqref{eq:aN} is within $0.005$ of the simulation up to
$1$\,mm and tends to $0.81$, the ratio of the single-pinch rate to the
eight-pinch rate: the mean SNR loses exactly the array gain, and the simulated
rate settles $0.05$ lower because the randomly phased pinches then form a
fading channel. NOMA retains $41\%$ at $0.25$\,mm and $16\%$ at $1$\,mm: a
position error of $2\%$ of the wavelength cuts its rate by more than half. The
bound \eqref{eq:floor} lies below the simulation, $0.05$ against $0.16$ at
$1$\,mm, but predicts the collapse. This ordering holds in the
interference-limited regime of the setting: the NOMA loss is a residual
interference that grows with $P$, whereas the array-gain loss of TDMA is
noise-limited and shrinks with $P$, as \eqref{eq:aN} shows for
$\gamma_0\to\infty$.

\begin{figure}[!t]
\centering
\includegraphics[width=\columnwidth]{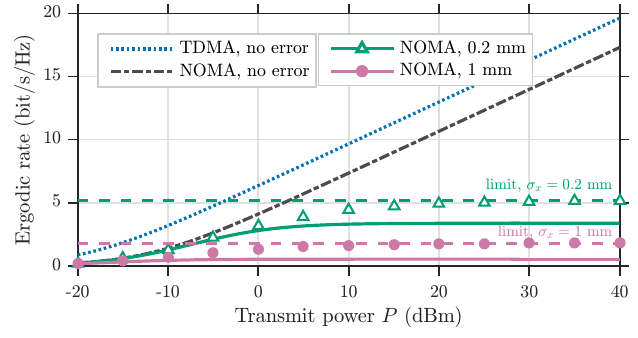}
\caption{Ergodic rate versus transmit power for the NOMA strong user with
one pinch, at $\sigma_x=0.2$ and $1$\,mm. Solid lines: lower bound
\eqref{eq:floor}; dashed lines: $P\to\infty$ limit; markers: exact-geometry
Monte Carlo. Dotted and dash-dot lines: error-free TDMA and NOMA.}
\label{fig:ratevsP}
\end{figure}

Fig.~\ref{fig:ratevsP} shows how the position error limits the NOMA strong
user as the transmit power grows. Without error, the rates of TDMA and NOMA
increase steadily with $P$. With position error, the simulated NOMA rate
first follows the error-free curve and then saturates: as $P$ grows, the
noise term in the SINR \eqref{eq:noma} becomes negligible and the SINR tends
to $\alpha_s/(\alpha_w\Omega)$, which depends only on the residual $\Omega$
of the imperfect SIC, not on $P$. The resulting ceiling,
$\E[\log_2(1+\alpha_s/(\alpha_w\Omega))]$, evaluated numerically, is
$5.19$\,bit/s/Hz at $\sigma_x=0.2$\,mm and $1.79$\,bit/s/Hz at $1$\,mm. The
lower bound \eqref{eq:floor} follows the simulation closely below about
$-15$\,dBm, where noise dominates, and then flattens at $3.37$ and
$0.52$\,bit/s/Hz, below the true ceiling but of the same nature. The ceiling
is thus set by the phase-error spread $\sigma_\phi$ and the power split, and
no increase of the transmit power lifts it.

\begin{figure}[!t]
\centering
\includegraphics[width=\columnwidth]{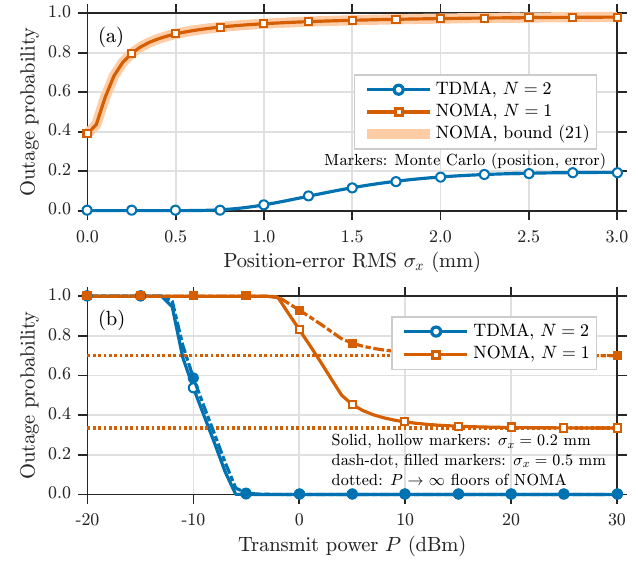}
\caption{Outage probability (a) versus the position error and (b) versus
transmit power at $\sigma_x=0.2$\,mm (solid, hollow markers) and $0.5$\,mm
(dash-dot, filled markers). Lines: \eqref{eq:pout_tdma2} and
\eqref{eq:pout_cond} averaged over $y$; halo in (a): bound
\eqref{eq:pout_avg}; dotted lines in (b): NOMA floors; markers:
exact-geometry Monte Carlo.}
\label{fig:outage}
\end{figure}

Fig.~\ref{fig:outage} shows the spatially averaged outage probability for a
user uniform in $y$, at the target $\SINR_{\rm th}=10$\,dB. The averages of
\eqref{eq:pout_tdma2} and \eqref{eq:pout_cond}, with the series taken to
$m=10$, match the simulation within $0.002$ over the whole range, which
confirms the phase-error model of Section~\ref{sec:chain}. In
Fig.~\ref{fig:outage}(a), at $P=1$\,dBm, position error raises the NOMA
outage steeply, from $0.39$ under ideal placement to $0.80$ at $0.25$\,mm,
whereas two-pinch TDMA is barely affected below $0.5$\,mm and reaches $0.19$
at $3$\,mm; single-pinch TDMA has no outage at this power and is not shown.
The bound \eqref{eq:pout_avg} nearly coincides with the NOMA curve, within
$0.0014$, because its only relaxation, $\Omega\le\Delta\phi_{1s}^2$, is
nearly exact when the tolerance $c(0)$ is small, as it is at this power. In
Fig.~\ref{fig:outage}(b), the TDMA outage falls to zero as the power grows,
whereas the NOMA outage settles on the floors $0.34$ and $0.70$ at
$\sigma_x=0.2$ and $0.5$\,mm, the limit of \eqref{eq:pout_cond} as
$P\to\infty$, which no increase of $P$ removes.

\section{Conclusion}\label{sec:concl}
This paper modeled the position errors of pinching antennas and traced them
to the link: each displacement maps linearly to a phase error, whose rms
value and coherence factor set the ergodic rate and the outage probability
of TDMA and NOMA. Single-pinch TDMA is immune, multi-pinch TDMA loses at most
its array gain, and NOMA meets a rate ceiling and an outage floor that no
increase of transmit power removes: the outcome is set by how many phasors
must stay aligned, not by the hardware tolerance alone. Simulations with the
exact geometry confirm the closed forms and show that, at $28$\,GHz, a
position error of $0.2$\,mm already halves the NOMA rate at high SNR. The
analysis assumes a single waveguide and a line-of-sight channel; the same
chain extends to multipath and to space division over several waveguides,
left for future work.

\bibliographystyle{IEEEtran}
\bibliography{main}

\end{document}

%% file: fig_sysmodel.tex
\begin{tikzpicture}[>=Stealth,font=\sffamily\small,line width=0.5pt]
  \shade[top color=black!8, bottom color=black!2] (0,0) -- (7.9,0) -- (9.37,0.85) -- (1.47,0.85) -- cycle;
  \draw[black!25, line width=0.3pt] (0,0) -- (7.9,0) -- (9.37,0.85) -- (1.47,0.85) -- cycle;
  \draw[black!10, line width=0.3pt] (2,0) -- (3.47,0.85) (4,0) -- (5.47,0.85) (6,0) -- (7.47,0.85);
  \draw[black!10, line width=0.3pt] (0.49,0.283) -- (8.39,0.283) (0.98,0.567) -- (8.88,0.567);
  \draw[->,black!65,thick] (0,0) -- (9.3,0) node[right,text=black] {$x$};
  \draw[->,black!65,thick] (0,0) -- (1.66,0.96) node[above right,text=black] {$y$};
  \draw[->,black!65,thick] (0,0) -- (0,3.75) node[above,text=black] {$z$};
  \node[anchor=north east,inner sep=1.5pt,text=black] at (0,0) {$O$};
  \shade[top color=blue!24, bottom color=blue!6] (1.9,2.51) rectangle (8.3,2.69);
  \draw[blue!55!black,line width=0.55pt] (1.9,2.51) -- (8.3,2.51) (1.9,2.69) -- (8.3,2.69);
  \draw[blue!55!black,fill=blue!16,line width=0.55pt] (8.3,2.6) ellipse (0.06 and 0.09);
  \draw[blue!55!black,fill=blue!16,line width=0.55pt] (1.9,2.6) ellipse (0.06 and 0.09);
  \node[blue!45!black,anchor=north east,font=\sffamily\footnotesize] at (8.25,2.46) {waveguide};
  \filldraw[fill=blue!18,draw=blue!65!black,rounded corners=1.5pt] (0.85,2.40) rectangle (1.70,2.80);
  \node[font=\sffamily\footnotesize,text=blue!65!black] at (1.275,2.6) {feed};
  \draw[->,black!70] (0.35,2.6) -- (0.82,2.6);
  \node[anchor=south,font=\sffamily\footnotesize,text=black] at (0.58,2.65) {$s$};
  \draw[blue!65!black,line width=0.7pt] (1.70,2.6) -- (1.86,2.6);
  \node[anchor=north,font=\sffamily\footnotesize,text=black] at (1.9,2.22) {$\boldsymbol{\psi}_0$};
  \draw[dashed,black!45] (1.9,2.28) -- (1.9,2.50);
  \foreach \px in {3.0} {
    \shade[top color=blue!40!black, bottom color=blue!75!black] (\px-0.13,2.69) -- (\px+0.13,2.69) -- (\px+0.075,2.86) -- (\px-0.075,2.86) -- cycle;
    \shade[top color=blue!75!black, bottom color=blue!40!black] (\px-0.13,2.51) -- (\px+0.13,2.51) -- (\px+0.075,2.34) -- (\px-0.075,2.34) -- cycle;
    \draw[blue!60!black,line width=0.4pt] (\px-0.13,2.69) -- (\px+0.13,2.69) -- (\px+0.075,2.86) -- (\px-0.075,2.86) -- cycle;
    \draw[blue!60!black,line width=0.4pt] (\px-0.13,2.51) -- (\px+0.13,2.51) -- (\px+0.075,2.34) -- (\px-0.075,2.34) -- cycle;
    \draw[red!65!black,line width=0.65pt] ($(\px,2.29)+(210:0.13)$) arc (210:330:0.13);
    \draw[red!65!black!55,line width=0.5pt] ($(\px,2.29)+(210:0.21)$) arc (210:330:0.21);
  }
  \node[anchor=south,text=black,inner sep=1pt] at (3.0,2.90) {$\boldsymbol{\psi}_1$};
  \foreach \px in {5.35} {
    \draw[black!45,dashed,line width=0.4pt] (\px-0.13,2.69) -- (\px+0.13,2.69) -- (\px+0.075,2.86) -- (\px-0.075,2.86) -- cycle;
    \draw[black!45,dashed,line width=0.4pt] (\px-0.13,2.51) -- (\px+0.13,2.51) -- (\px+0.075,2.34) -- (\px-0.075,2.34) -- cycle;
  }
  \node[anchor=east,font=\sffamily\footnotesize,text=black,align=right,inner sep=1pt] at (5.22,3.02) {nominal $x_n$};
  \foreach \px in {5.9} {
    \shade[top color=blue!40!black, bottom color=blue!75!black] (\px-0.13,2.69) -- (\px+0.13,2.69) -- (\px+0.075,2.86) -- (\px-0.075,2.86) -- cycle;
    \shade[top color=blue!75!black, bottom color=blue!40!black] (\px-0.13,2.51) -- (\px+0.13,2.51) -- (\px+0.075,2.34) -- (\px-0.075,2.34) -- cycle;
    \draw[blue!60!black,line width=0.4pt] (\px-0.13,2.69) -- (\px+0.13,2.69) -- (\px+0.075,2.86) -- (\px-0.075,2.86) -- cycle;
    \draw[blue!60!black,line width=0.4pt] (\px-0.13,2.51) -- (\px+0.13,2.51) -- (\px+0.075,2.34) -- (\px-0.075,2.34) -- cycle;
    \draw[red!65!black,line width=0.65pt] ($(\px,2.29)+(210:0.13)$) arc (210:330:0.13);
    \draw[red!65!black!55,line width=0.5pt] ($(\px,2.29)+(210:0.21)$) arc (210:330:0.21);
  }
  \node[anchor=south west,text=black,inner sep=1pt] at (6.0,2.88) {$(x_n{+}\Delta x_n,0,d)$};
  \draw[->,red!70!black,line width=0.8pt] (5.35,3.04) -- (5.9,3.04);
  \draw[dashed,black!45] (5.9,3.14) -- (5.9,2.88);
  \node[anchor=south,font=\sffamily\footnotesize,text=red!70!black] at (5.62,3.06) {$\Delta x_n$};
  \draw[<->,black!60] (1.9,3.42) -- (5.35,3.42);
  \draw[dashed,black!45] (1.9,3.42) -- (1.9,2.72);
  \draw[dashed,black!45] (5.35,3.42) -- (5.35,2.88);
  \node[anchor=south,font=\sffamily\footnotesize,text=black] at (3.6,3.44) {$d_n$ (guided, grows by $\Delta x_n$)};
  \foreach \px/\py/\nm in {4.42/0.30/uk, 6.65/0.09/ukp} {
    \draw[fill=white,rounded corners=1.6pt,line width=0.55pt,
          drop shadow={shadow xshift=0.8pt, shadow yshift=-0.8pt, fill=black, opacity=0.18}]
          (\px-0.18,\py) rectangle (\px+0.18,\py+0.62);
    \shade[top color=blue!22, bottom color=blue!6] (\px-0.135,\py+0.105) rectangle (\px+0.135,\py+0.50);
    \draw[line width=0.45pt,black!70] (\px-0.055,\py+0.555) -- (\px+0.055,\py+0.555);
    \fill[black!60] (\px,\py+0.052) circle (0.022);
    \coordinate (\nm) at (\px,\py+0.62);
  }
  \draw[dashed,black!45] (4.42,0.30) -- (3.9,0);
  \fill[black!50] (3.9,0) circle (0.025);
  \node[anchor=north,font=\sffamily\footnotesize,text=black] at (3.9,-0.05) {$x_k$};
  \node[anchor=north west,font=\sffamily\footnotesize,text=black,inner sep=1pt] at (4.12,0.20) {$y_k$};
  \node[anchor=west,font=\sffamily\footnotesize,text=black,inner sep=1pt] at (4.66,0.62) {$\mathbf{u}_k$};
  \node[anchor=west,font=\sffamily\footnotesize,text=black,inner sep=1pt] at (6.89,0.42) {$\mathbf{u}_{k'}$};
  \node[anchor=north east,font=\sffamily\footnotesize,text=black,inner sep=1.5pt] at (3.50,-0.05) {feed side, $\cos\theta{>}0$};
  \node[anchor=north,font=\sffamily\footnotesize,text=black,inner sep=1.5pt] at (6.60,-0.05) {far side, $\cos\theta{<}0$};
  \coordinate (pn) at (5.9,2.26);
  \draw[dashed,black!55] (pn) -- (uk);
  \node[anchor=west,font=\sffamily\footnotesize,text=black,inner sep=1pt] at (5.30,1.62) {$r_{nk}$};
  \draw[dashed,black!35] (pn) -- (ukp);
  \draw[dashed,black!35] (3.0,2.26) -- (uk);
  \node[anchor=east,font=\sffamily\footnotesize,text=black,inner sep=1pt] at (3.60,1.62) {$r_{1k}$};
  \draw[black!60,line width=0.4pt] (pn) -- (3.9,2.26);
  \draw[black!60,line width=0.4pt] (3.9,2.20) -- (3.9,2.32);
  \draw[dashed,black!40,line width=0.3pt] (3.9,2.20) -- (3.9,0.04);
  \node[anchor=north,font=\sffamily\footnotesize,text=black,inner sep=1.5pt]
       at (4.45,2.22) {$x_n-x_k$};
  \draw[->,black!70,line width=0.45pt] ($(pn)+(180:0.5)$) arc (180:222:0.5);
  \node[font=\sffamily\footnotesize,text=black,inner sep=1pt] at ($(pn)+(203:0.72)$) {$\theta$};
  \draw[dashed,black!40] (8.3,2.6) -- (8.75,2.6);
  \draw[<->,black!60] (8.75,0) -- (8.75,2.6);
  \node[anchor=west,font=\sffamily\footnotesize,text=black] at (8.78,1.30) {$d$};
\end{tikzpicture}

%% file: main.bbl
\begin{thebibliography}{10}
\providecommand{\url}[1]{#1}
\csname url@samestyle\endcsname
\providecommand{\newblock}{\relax}
\providecommand{\bibinfo}[2]{#2}
\providecommand{\BIBentrySTDinterwordspacing}{\spaceskip=0pt\relax}
\providecommand{\BIBentryALTinterwordstretchfactor}{4}
\providecommand{\BIBentryALTinterwordspacing}{\spaceskip=\fontdimen2\font plus
\BIBentryALTinterwordstretchfactor\fontdimen3\font minus
  \fontdimen4\font\relax}
\providecommand{\BIBforeignlanguage}[2]{{%
\expandafter\ifx\csname l@#1\endcsname\relax
\typeout{** WARNING: IEEEtran.bst: No hyphenation pattern has been}%
\typeout{** loaded for the language `#1'. Using the pattern for}%
\typeout{** the default language instead.}%
\else
\language=\csname l@#1\endcsname
\fi
#2}}
\providecommand{\BIBdecl}{\relax}
\BIBdecl

\bibitem{Ref_Ding2025PinchingPerspective}
Z.~Ding, R.~Schober, and H.~V. Poor, ``Flexible-antenna systems: A
  pinching-antenna perspective,'' \emph{IEEE Trans. Commun.}, vol.~73, no.~10,
  pp. 9236--9253, Oct. 2025.

\bibitem{Ref_Liu2025Tutorial}
Y.~Liu \emph{et~al.}, ``Pinching-antenna systems ({PASS}): A tutorial,''
  \emph{IEEE Trans. Commun.}, vol.~74, pp. 4881--4918, 2026.

\bibitem{Ref_Tyrovolas2026Performance}
D.~Tyrovolas, S.~A. Tegos, P.~D. Diamantoulakis, S.~Ioannidis, C.~K. Liaskos,
  and G.~K. Karagiannidis, ``Performance analysis of pinching-antenna
  systems,'' \emph{IEEE Trans. Cogn. Commun. Netw.}, vol.~12, pp. 590--601,
  2026.

\bibitem{Ref_Yue2026NOMAPerformance}
X.~Yue, X.~Tao, J.~Zhao, X.~Lei, Y.~Liu, and Z.~Ding, ``Performance analysis of
  pinching antenna systems enabled {NOMA} communications,''
  \emph{arXiv:2604.25285}, 2026.

\bibitem{Ref_Pakravan2026SecrecyPosition}
S.~Pakravan, I.~Trigui, W.~Ajib, and W.-P. Zhu, ``Impact of position
  uncertainty on the secrecy performance of pinching-antenna systems,''
  \emph{IEEE Commun. Lett.}, vol.~30, pp. 2765--2769, 2026.

\bibitem{Ref_Chen2026HybridPASS}
K.~Chen, C.~Qi, O.~A. Dobre, and C.~Yuen, ``Hybrid pinching antenna systems:
  Architecture and beamforming design,'' \emph{IEEE Trans. Wireless Commun.},
  vol.~25, pp. 12\,129--12\,144, 2026.

\bibitem{Ref_AlaaEldin2026BERNOMA}
M.~AlaaEldin, A.~S. Inwood, X.~Mu, and M.~Matthaiou, ``{BER} analysis and
  optimization of pinching-antenna-based {NOMA} communications,'' in
  \emph{Proc. IEEE Int. Conf. Commun. (ICC)}, Glasgow, U.K., May 2026, pp.
  1--6.

\bibitem{Ref_Wang2026RSMAMultiCarrier}
P.~Wang, H.~Wang, Y.~Fu, and R.~Song, ``Optimization for pinching antennas
  system with multiple carriers and rate splitting multiple access,''
  \emph{arXiv:2604.14736}, 2026.

\bibitem{Ref_Bozkurt2026Trajectory}
Y.~{Tokur Bozkurt}, ``Trajectory-aware antenna reconfiguration for pinching
  antenna systems: Performance analysis under user mobility,'' \emph{Digit.
  Signal Process.}, vol. 178, p. 106162, 2026.

\bibitem{Ref_Wang2025AntennaActivationNOMA}
K.~Wang, Z.~Ding, and R.~Schober, ``Antenna activation for {NOMA} assisted
  pinching-antenna systems,'' \emph{IEEE Wireless Commun. Lett.}, vol.~14,
  no.~5, pp. 1526--1530, May 2025.

\end{thebibliography}
